\newcommand{\mathbbm}[1]{\text{\usefont{U}{bbm}{m}{n}#1}}
\newtheorem{theorem}{Theorem}
\newtheorem{lemma}[theorem]{Lemma}
\newcommand{\tr}{\text{tr}}
\newcommand{\one}{\mbox{$1 \hspace{-1.0mm}  {\bf l}$}}
\newcommand{\N}{\mathbb{N}}
\newcommand{\bra}[1]{\langle #1|}
\newcommand{\ket}[1]{|#1\rangle}
\newcommand{\be}{\begin{equation}}
\newcommand{\ee}{\end{equation}}
\newcommand{\bea}{\begin{eqnarray}}
\newcommand{\eea}{\end{eqnarray}}
\newcommand{\kommentar}[1]{}
\newcommand{\identity}{\mathbbm{1}}
\newcommand{\forget}[1]{}
\begin{document}

\title{Measurement outcomes that do not occur and their role in entanglement transformations}

\author{Martin Hebenstreit}
\affiliation{Institute for Theoretical Physics, University of Innsbruck, Technikerstr. 21A,  6020 Innsbruck, Austria}
\author{Matthias Englbrecht}
\affiliation{Institute for Theoretical Physics, University of Innsbruck, Technikerstr. 21A,  6020 Innsbruck, Austria}
\author{Cornelia Spee}
\affiliation{Naturwissenschaftlich-Technische  Fakultät,  Universität  Siegen,  Walter-Flex-Straße  3,  57068  Siegen,  Germany}
\author{Julio I. de Vicente}
\affiliation{Departamento de Matemáticas, Universidad Carlos III de Madrid, Avda. de la Universidad 30, E-28911, Leganés (Madrid), Spain}
\author{Barbara Kraus}
\affiliation{Institute for Theoretical Physics, University of Innsbruck, Technikerstr. 21A,  6020 Innsbruck, Austria}

\begin{abstract}
The characterization of transformations among entangled pure states via local operations assisted by classical communication (LOCC) is a crucial problem in quantum information theory for both theoretical and practical reasons. As LOCC has a highly intricate structure, sometimes the larger set of separable (SEP) maps is considered, which has a mathematically much simpler description. In the literature, mainly SEP maps consisting of invertible Kraus operators have been taken into account. In this paper we show that the consideration of those maps is not sufficient when deciding whether a state can be mapped to another via general SEP transformations. This is done by providing explicit examples of transformations among pure 3- and 5- qubits states, which are feasible via SEP maps containing singular Kraus operators, however, not possible via SEP maps containing solely regular Kraus operators. The key point that allows to construct the SEP maps is to introduce projective measurements that occur with probability zero on the input state. The fact that it is not sufficient to consider SEP maps composed out of regular Kraus operators even in the case of pure state transformations, also affects the results on LOCC transformations among pure states. However, we show that non-invertible Kraus operators do not help in state transformations under LOCC with finitely many rounds of classical communication, i.e. the necessary and sufficient condition for SEP transformations with invertible Kraus operators is still a necessary condition for convertibility under finite-round LOCC. Moreover, we show that the results on transformations via SEP that are not possible with LOCC (including infinitely many rounds of classical communication) presented in M. Hebenstreit, C. Spee, and B. Kraus, Phys. Rev. A \textbf{93}, 012339 (2016) are not affected.
\end{abstract}

\maketitle

\section{Introduction}
Understanding the entanglement properties of multipartite quantum systems plays a major role in both quantum information theory and condensed-matter physics. On the one hand, this allows to derive protocols for quantum communication such as secret sharing \cite{SecretSh} and schemes for quantum computation such as measurement-based computation \cite{RaBr01} to cite some examples. On the other hand, the entanglement structure of many-body systems can be used to characterize phase transitions \cite{AmFa08} and to devise schemes for numerical simulation using tensor network states \cite{orus}. In general, entanglement is considered to be one of the non-classical ingredients that allows quantum technologies to outperform their classical counterparts. For this reason, a resource theory of entanglement has been developed over the last two decades \cite{review}. This theory provides a rigorous framework that makes it possible to qualify and quantify this resource and to understand the fundamental possibilities and limitations behind its manipulation. However, many questions that have been long answered for bipartite systems turn out to be much more difficult when more parties are taken into account. Besides its fundamental interest, advancing further the resource theory of entanglement in the multipartite regime might lead to new genuinely many-body applications of quantum information theory.

Entanglement theory is formulated as a resource theory \cite{resource}. Such theories are built from the notion of the so-called free operations, which, due to the physical setting, are easily implementable and are therefore considered to be accessible at no cost. States that cannot be prepared with free operations acquire the status of a resource, in the sense that they might allow to overcome the limitations of what is possible by means of the free operations alone. Furthermore, the notion of free operations allows to define an operational partial order in the set of resource states: if there exists a free operation $\Lambda$ such that $\Lambda(\rho)=\sigma$, then $\rho$ is not less resourceful than $\sigma$. This is because any protocol that can be successfully implemented in this scenario (i.e.\ with free operations) starting from $\sigma$ can also be implemented successfully starting from $\rho$.  Functionals that preserve this ordering are considered to be resource quantifiers. Entanglement is a resource shared by different possibly space-separated parties. In this context, local operations assisted by  classical communication (LOCC) arise as a natural and operationally motivated choice of free operations. LOCC maps are built from local completely positive, trace preserving (CPTP) maps which the parties can correlate by exchanging classical communication. On the one hand, understanding LOCC allows to order and quantify the set of entangled states and to identify those that are potentially more useful. On the other hand, it provides protocols for the manipulation of this resource in practice.

A milestone result in this context is Nielsen's theorem \cite{nielsen}, which characterizes LOCC convertibility among pure bipartite states in terms of majorization. Unfortunately, the extension of Nielsen's theorem to the multipartite case is not straightforward at all. The mathematical characterization of the set of LOCC maps and LOCC transformations is extremely complicated due to the intricacies that arise when considering a potentially unbounded number of rounds of classical communication \cite{ChLe14}. Indeed, it is known that in contrast to bipartite pure state transformations \cite{LoPopescu}, no simplification can be placed on the number of rounds of classical communication that is sufficient to consider in general \cite{Ch11,ChHs17}.
Notwithstanding, several different works over the last years have led to considerable progress in our understanding of the rich entanglement structure of pure multipartite states. Reference \cite{Kr10} characterizes when pure multipartite qubit-states are related by local unitary (LU) transformations. Since LUs are invertible LOCC transformations, this defines equivalence classes of states with the same entanglement \cite{Gi02}. Reference \cite{slocc} introduces the notion of stochastic-LOCC (SLOCC) classes, which provides a coarse-grained classification of states with different entanglement properties. In more detail, two pure states are said to be in the same SLOCC class if they can be interconverted with non-vanishing probability by probabilistic LOCC. Thus, although this classification is based on an equivalence relation and, therefore, provides no sense of ordering, it tells us that LOCC manipulation can only occur within these classes. Indeed, LOCC convertibility has been later characterized within SLOCC classes with a simple mathematical structure such as the GHZ \cite{turgutghz} or the W \cite{turgutw} family. Another fruitful approach is to consider inner or outer approximations of the set of LOCC maps with a mathematically more tractable set of maps within a fixed SLOCC class. A natural and physically motivated inner approximation to LOCC is LOCC$_\mathbb{N}$, the set of LOCC maps implementable with a finite number of rounds of classical communication. The fact that such protocols have to terminate has allowed to characterize all states that are reachable by this class of transformations within a given (generic) SLOCC class and has allowed to identify multipartite protocols which cannot be boiled down to a concatenation of deterministic 1-round protocols as in the bipartite and the aforementioned multipartite case \cite{Spde17}. A particularly useful superset of LOCC is that of separable (SEP) maps, which are those CPTP maps that admit a Kraus decomposition in which all Kraus operators factorize in tensor products for each party \cite{rains}. Although it is known that the inclusion is strict, instances of protocols in which SEP outperforms LOCC are rare \cite{sepnotlocc} and, moreover, for certain tasks such as bipartite pure-state transformations they are known to be effectively the same \cite{gheorghiu}. In \cite{GoWa11} transformations among multipartite pure, fully entangled states (i.e.\ states for which the local density matrices are of full rank) within the same SLOCC class have been considered. There, a necessary and sufficient condition for the existence of a SEP map which transforms the pure initial to the pure final state has been provided.
However, there has been a constraint on the SEP map, which has been overlooked so far \cite{erratumGoWa11}. The criterion holds for SEP maps, whose Kraus operators are invertible. In the following we refer to this set of CPTP maps as SEP$_1$. Until now (with the exception of \cite{GoKr17,SaGo18}) SEP$_1$ has been considered as a superset of LOCC. The main reason why singular Kraus operators have not been considered (in the context of LOCC) is that they map the initial state into a state which is no longer in the same SLOCC class as the final state. However, the fact that the initial state could be annihilated by the Kraus operator has been ignored. Due to that, the condition on the existence of a SEP$_1$ map has been subsequently used to characterize LOCC convertibility among pure multipartite fully entangled states in several general systems such as 3-qubit states, 4-qubit states and 3-qutrit states \cite{deSp13,SaSc15,SpdV16,HeSp16}. In \cite{GoKr17,SaGo18}, however, it has been proven that generic pure fully entangled states, i.e. almost all fully entangled states, of more than three parties with arbitrary equal local dimension are isolated, i.e.\ they cannot be obtained from nor transformed to inequivalent pure fully entangled states by SEP and, hence, by LOCC.

In this work we explore the differences in what comes to fully-entangled pure-state transformations between SEP$_1$ and SEP and its consequences for deciding LOCC convertibility. Remarkably, we show that  necessary and sufficient conditions for SEP$_1$ convertibility are only sufficient for SEP convertibility. Note that this implies that SEP$_1$ is not necessarily a superset of LOCC. In order to prove this, we construct explicit examples of SEP transformations which are infeasible via $\textrm{SEP}_1$. Interestingly, these instances exist for systems of very small size and dimension such as 3-qubit and 5-qubit states. The crucial observation behind these constructions is that SEP transformations can contain, in contrast to $\textrm{SEP}_1$, projective Kraus operators which annihilate the initial state. Stated more operationally, since one can see that non-invertible Kraus operators which occur with non-zero probability do not need to be taken into account, the difference is given by measurement operators whose outcomes have zero probability when applied to the initial state. This does not only shed light on the role of the outcomes that cannot occur but, as explained above, it is important to decide how to interpret results that have been obtained previously based on the condition of \cite{GoWa11}. Importantly, we show here that for LOCC$_\N$ transformations among fully entangled states non-invertible Kraus operators do not need to be taken into account. In other words, the necessary and sufficient condition for SEP$_1$ convertibility remains a necessary condition for LOCC$_\mathbb{N}$ convertibility. Furthermore, we will also provide a general condition under which the conditions for the existence of a $\textrm{SEP}$ state transformation coincide with those for the existence of a $\textrm{SEP}_1$ map. This is used to show that the examples given in \cite{HeSp16} using the SEP$_1$ condition indeed provide pure state transformations which are possible via SEP but not via LOCC. On the other hand, the question of whether LOCC transformations in this context are only possible if they can be be implemented by $\textrm{SEP}_1$ remains unanswered, i.e.\ it is not clear whether the necessary and sufficient condition for SEP$_1$ convertibility is also a necessary condition for LOCC convertibility if one allows infinitely many rounds of classical communication. Figure \ref{fig1} summarizes the relation between aforementioned sets of pure state transformations incorporating the findings of this paper.

The structure of this paper is as follows. We will first define our notation. Then we will review the result of \cite{GoWa11} and provide necessary and sufficient conditions for transformations via SEP. We will then discuss the relations among the different separable classes of operations. In particular, we will provide examples for transformations that are only possible if singular Kraus operators are taken into account and we will show that $\textrm{LOCC}_\N$ transformations among fully entangled pure states are included in $\textrm{SEP}_1$. Moreover, we will derive a sufficient condition for SEP transformations to be implementable via $\textrm{SEP}_1$ and we will provide an adaptation of the proof for the examples of \cite{HeSp16}. Finally we will give a conclusion and an outlook.

\begin{figure}[h]
\vspace{10pt}
\begin{tcolorbox}
\textbf{Sets of operations for pure state transformations:}
\begin{itemize}
\item $\textrm{SEP}_1\subsetneq_T\textrm{SEP}$ [here]
\item $\textrm{LOCC}_\N\subseteq_T \textrm{SEP}_1$ [here]
\item In case the initial state possesses only the trivial local symmetry, $\identity$, then $\textrm{SEP}_1=_T\textrm{SEP}$ \cite{GoKr17,SaGo18}
\item $\textrm{LOCC}\subsetneq_T \textrm{SEP}$ [28, here]
\item $\textrm{SEP}_1 {\not\subseteq}_T \textrm{LOCC}$ [here]
\item $\textrm{LOCC}\overset{?}{\subset}_T\textrm{SEP}_1$
\item $\textrm{LOCC}_\N\overset{?}{\subseteq}_T\textrm{LOCC}$
\end{itemize}
\end{tcolorbox}
\vspace{10pt}

\centering
\includegraphics[width=0.5\textwidth]{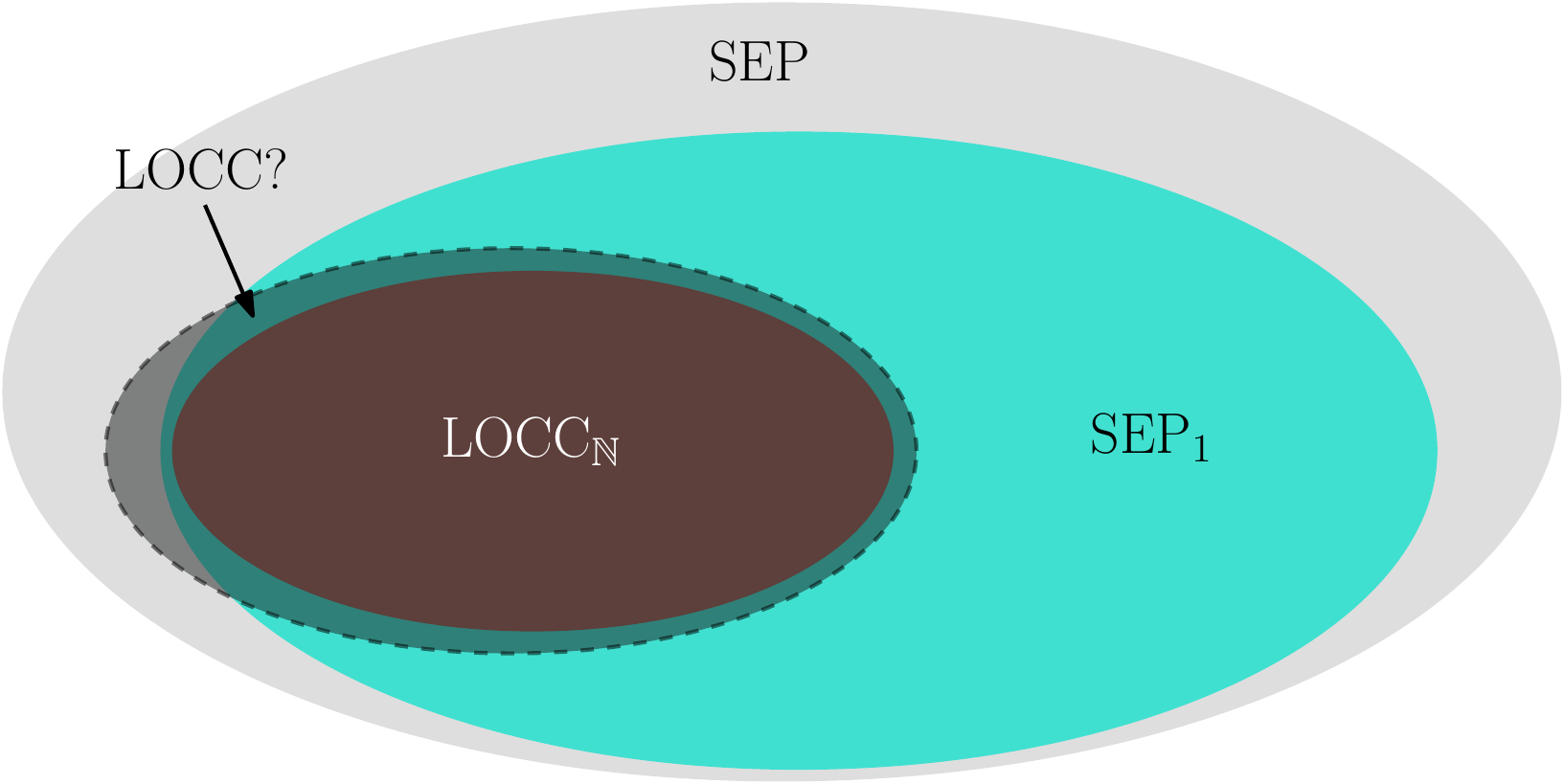}
\caption{Summary of the inclusion relations of possible pure state transformations among fully entangled states. As explained in Section \ref{sec:conventions}, we use the subscript $T$ to indicate that a relation should be understood in this sense.
 Surprisingly, there exist pure state transformations that are possible via SEP, but not via $\textrm{SEP}_1$, i.e., when considering only regular Kraus operators (see Section \ref{sec:examples}). Hence, $\textrm{SEP}_1\subsetneq_T\textrm{SEP}$.
It is an open problem whether there exist pure state transformations among fully entangled states that are possible via LOCC, but not via $\textrm{LOCC}_\N$. There do exist pure state transformations that are possible via $\textrm{SEP}$, but not via LOCC (which is proven here by adapting the proof of \cite{HeSp16}, in which SEP was considered to coincide with $\textrm{SEP}_1$). In fact, these examples show that $\textrm{SEP}_1 {\not\subseteq}_T \textrm{LOCC}$. It is currently not clear, whether there exist LOCC transformations which are not possible via $\textrm{SEP}_1$. In this article, we answer a related question by showing that finite round LOCC transformations among fully entanlged states are always possible via $\textrm{SEP}_1$, i.e., $\textrm{LOCC}_\N\subseteq_T \textrm{SEP}_1$.
\label{fig1}}
\end{figure}

 \section{Notation and preliminaries}
\label{sec:conventions}
In this work we consider pure states of an arbitrary number of parties $n$ and arbitrary local dimensions $\{d_i\}$ which are fully entangled. That is, states $|\psi\rangle$ in the Hilbert space $\mathcal{H}=\bigotimes_{i=1}^n\mathbb{C}^{d_i}$ such that the reduced density matrix for each party $i$, $\rho_i$, fulfills rank$\rho_i=d_i$. We call a state critical if $\rho_i \propto \identity$ for all parties $i$. We will consider transformations among fully entangled states, which are given by CPTP maps $\Lambda:\mathcal{B}(\mathcal{H})\to\mathcal{B}(\mathcal{H})$, where $\mathcal{B}(\mathcal{H})$ is the set of bounded linear operators acting on $H$ \footnote{Of course, one can also consider maps with different input and output Hilbert spaces but here we only analyze transformations among states with the same number of parties and local dimensions and we will take this into account in all forthcoming definitions.}. Every CPTP map admits a Kraus decomposition, i.e.\ it can be written as
\begin{equation}\label{kraus}
\Lambda(X)=\sum_i K_iXK_i^\dagger,
\end{equation}
for some set of operators $\{K_i\}\subset\mathcal{B}(\mathcal{H})$ fulfilling $\sum_iK_i^\dagger K_i=\one$ referred to as Kraus operators. As mentioned above, we will be interested in transformations among fully entangled pure states with particular subsets of the set of CPTP maps: SEP, SEP$_1$, LOCC$_\mathbb{N}$ and LOCC, whose definitions we provide in the following.

A CPTP map $\Lambda$ is said to be in SEP if it admits a Kraus decomposition with Kraus operators $\{K_i\}$ such that for all $i$ $K_i=\bigotimes_{j=1}^nK_i^{(j)}$ with $K_i^{(j)}\in M(d_j, \mathbb{C})$ $\forall j$, the last symbol referring to square matrices of size $d_j$ with complex entries. A SEP map $\Lambda$ is said to be in SEP$_1$ if there exists a Kraus decomposition of the above form which fulfills moreover that $K_i^{(j)}\in GL(d_j, \mathbb{C})$ $\forall i,j$, i.e.\ every Kraus operator is regular.

A CPTP map $\Lambda$ is said to be in LOCC$_\mathbb{N}$ with $m$ rounds of classical communication if it admits a Kraus decomposition with Kraus operators $\{K_i\}$ in which the index $i$ can be decomposed as a multi-index $i=(i_1\cdots i_m)$ so that
\begin{equation}
K_i=K_{(i_1\cdots i_m)}=\prod_{k=1}^m L_{i_k}(\{i_j\}_{j<k}),
\end{equation}
where the product should be understood from right to left (e.g. $\prod_{k=1}^2L_{i_k}=L_{i_2}L_{i_1}$) and
\begin{widetext}
\begin{equation}
L_{i_k}(\{i_j\}_{j<k})=U_{i_k}^{(1)}(\{i_j\}_{j<k})\otimes\cdots\otimes U_{i_k}^{(s_k-1)}(\{i_j\}_{j<k})\otimes P_{i_k}^{(s_k)}(\{i_j\}_{j<k})\otimes U_{i_k}^{(s_k+1)}(\{i_j\}_{j<k})\otimes\cdots\otimes U_{i_k}^{(n)}(\{i_j\}_{j<k}),\label{lio}
\end{equation}
\end{widetext}
where all the matrices labeled with $U$ are unitary, $s_k=s_k(\{i_j\}_{j<k})$ and
\begin{equation}
\sum_{i_k}(P_{i_k}^{(s_k)}(\{i_j\}_{j< k}))^\dag P_{i_k}^{(s_k)}(\{i_j\}_{j< k})=\one
\end{equation}
for all the possible values of $\{i_j\}_{j< k}$. That is, every element of the multi-index $i_k$ corresponds to a round, in which party $s_k$ implements a generalized measurement with measurement operators $\{P_{i_k}\}$. The identity of this party and the particular map he/she implements depend on all previous values of the elements of the multi-index $\{i_j\}_{j<k}$, which are known to every party through the use of classical communication. Then, party $s_k$ transmits to all other parties the precise outcome $i_k$ he/she obtains implementing the generalized measurement. Based on this value and all previous values of the elements of the multi-index, the remaining parties implement a unitary transformation to their share of the state, which concludes the round.

In order to define the set LOCC allowing for infinitely many rounds of classical communication, we first need to introduce the notion of composable LOCC$_\mathbb{N}$ maps. An LOCC$_{\mathbb{N}}$ map $\Lambda$ with $m$ rounds of classical communication and an LOCC$_{\mathbb{N}}$ map $\Lambda'$ with $m+1$ rounds of classical communication are said to be composable if they admit a Kraus decomposition as above with respective Kraus operators $\{K_i\}$ and $\{K'_i\}$ such that
\begin{equation}
K'_{(i_1\cdots i_{m+1})}=L_{i_{m+1}}(\{i_j\}_{j<m+1})K_{(i_1\cdots i_m)},
\end{equation}
where $L_{i_{m+1}}(\{i_j\}_{j<m+1})$ can be written as in Eq.\ (\ref{lio}). Thus, a CPTP map $\Lambda$ is said to be in LOCC if it is in $\textrm{LOCC}_\N$ or if it is the limit of a sequence of LOCC$_{\mathbb{N}}$ maps $\{\Lambda_i\}$ in which the maps $\Lambda_i$ and $\Lambda_{i+1}$ are composable $\forall i$.

Whenever there exists a map $\Lambda$ in SEP, SEP$_1$ or LOCC$_\mathbb{N}$ such that $\Lambda(|\psi\rangle\langle\psi|)=|\phi\rangle\langle\phi|$, we say that $|\psi\rangle$ can be converted into $|\phi\rangle$ by SEP, SEP$_1$ or LOCC$_\mathbb{N}$ operations. We make the analogous claim for LOCC with infinitely many rounds of classical communication if there exists a sequence of LOCC$_{\mathbb{N}}$ maps as above such that $\lim_{i\to\infty}||\Lambda_i(|\psi\rangle\langle\psi|)-|\phi\rangle\langle\phi|||=0$ in any matrix norm $||\cdot||$ of choice. From the above definitions it should be clear that $\textrm{SEP}_1\subset \textrm{SEP}$ and $\textrm{LOCC}_\mathbb{N}\subset \textrm{LOCC} \subset \textrm{SEP}$. However, we want to understand here whether the inclusion $X\subset Y$ translates into the existence of a transformation among fully entangled pure states by the operations given by $Y$ but not by the operations given by $X$ or whether both sets of transformation are equally powerful in this context. For this, we write $X=_TY$ if whenever there exists a map $\Lambda$ in $Y$ that transforms $|\psi\rangle$ into $|\phi\rangle$, there exists a map $\Lambda'$ in $X$ that transforms $|\psi\rangle$ into $|\phi\rangle$ and viceversa. On the other hand, we write $X\subsetneq_TY$ if $X\subset Y$ and for some states $|\psi\rangle$ and $|\phi\rangle$ the conversion $|\psi\rangle$ into $|\phi\rangle$ is possible within $Y$ but there exists no map in $X$ that transforms $|\psi\rangle$ into $|\phi\rangle$.

As explained in the introduction the considered transformations can only occur within SLOCC classes. Two states $|\psi\rangle,|\phi\rangle\in \mathcal{H}$ are said to be in the same SLOCC class if $|\phi\rangle=\bigotimes_{i=1}^ng_i|\psi\rangle$ with $g_i\in GL(d_i, \mathbb{C})$. We will consider for each SLOCC class a representative which we will refer to as $\ket{\psi}$. Other states in the SLOCC class are then identified by regular local operators acting on $\ket{\psi}$. Usually, we will use $ g \ket{\psi}$ with $g= \otimes_i g_i$  to denote the initial state and  $h \ket{\psi}$  with $h= \otimes_i h_i$ as the final state of a potential state transformation. Here, $g_i$ and $h_i$ are regular operators which reflects that we are interested in transformations among fully entangled states. Moreover, we will use the notation $G = g^\dagger g$ and $H = h^\dagger h$.

The stabilizer (or symmetry group) of $\ket{\psi}$, i.e., the set of local invertible operators leaving $\ket{\psi}$ invariant,  will be denoted by $\mathcal{S}_\psi$. More precisely, we have that
\begin{align}
\mathcal{S}_\psi = &\left\{ S : S\ket{\psi} = \ket{\psi}, S = S^{(1)} \otimes \ldots \otimes S^{(n)}, \right. \\ \nonumber
&\qquad \left. S^{(i)} \in GL(d_i, \mathbb{C}) \right\}.
\end{align}

Furthermore, we will denote by $\mathcal{N}_\psi$ the set of local operators which annihilate the state $\ket{\psi}$, i.e.

\begin{align}
\mathcal{N}_\psi = &\left\{ N : N\ket{\psi} = 0, N = N^{(1)} \otimes \ldots \otimes N^{(n)}, \right. \\ \nonumber
&\qquad \left. N^{(i)} \in M(d_i, \mathbb{C}) \right\}.
\end{align}

As we will see the stabilizer and the set annihilating the representative define which state transformations  are possible via SEP. In the next section we will discuss in detail the necessary and sufficient condition for such transformations, as well as the condition introduced previously in \cite{GoWa11}.

\section{State transformations}

In \cite{GoWa11} state transformation via separable maps  which  only involve regular matrices as Kraus operators have been considered.  In particular, the following necessary and sufficient condition for the existence of transformations among pure states via $\textrm{SEP}_1$ has been shown \cite{GoWa11}.
\begin{theorem}[\cite{GoWa11}]
The state $g \ket{\psi}$ can be transformed to $h \ket{\psi}$ via $\textrm{SEP}_1$ if and only if there exists a finite set of probabilities $p_k \geq 0$, $\sum_k p_k = 1$, and symmetries $\{S_k\}_k \subseteq \mathcal{S}_{\psi}$ such that

\begin{align}
\label{eq:sep}
\sum_{k} p_k S_k^\dagger H S_k = r G,
\end{align}
where $r = ||h\ket{\psi}||^2 / ||g\ket{\psi}||^2$.

\end{theorem}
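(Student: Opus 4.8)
The plan is to prove both implications directly, by exhibiting an explicit dictionary between separable Kraus operators and the symmetries in $\mathcal{S}_\psi$. For the \emph{if} direction I would take the data $\{p_k,S_k\}$ satisfying Eq.~(\ref{eq:sep}) and define the candidate map through the Kraus operators $K_k=\sqrt{p_k/r}\,h\,S_k\,g^{-1}$. Each $K_k$ is a tensor product of invertible local matrices (the scalar can be absorbed into one tensor factor, and $h$, $S_k$, $g^{-1}$ are all local and invertible), so the map is manifestly in $\textrm{SEP}_1$. Completeness is immediate: $\sum_k K_k^\dagger K_k=(1/r)(g^{-1})^\dagger\left(\sum_k p_k S_k^\dagger H S_k\right)g^{-1}=(g^{-1})^\dagger G\,g^{-1}=\one$ using Eq.~(\ref{eq:sep}) and $G=g^\dagger g$. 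Finally, since $S_k\ket{\psi}=\ket{\psi}$, one gets $K_k\,g\ket{\psi}=\sqrt{p_k/r}\,h\ket{\psi}$ for every $k$, hence $\Lambda(g\ketbra{\psi}g^\dagger)=(1/r)\,h\ketbra{\psi}h^\dagger$, which is precisely the (suitably normalized) target once one recalls $r=\|h\ket{\psi}\|^2/\|g\ket{\psi}\|^2$.

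For the \emph{only if} direction, suppose $\Lambda(X)=\sum_k K_kXK_k^\dagger$ is a $\textrm{SEP}_1$ map carrying out the transformation, with each $K_k$ a product of invertible local matrices and $\sum_k K_k^\dagger K_k=\one$. Since the output is a pure, hence rank-one, state, the positive semidefinite operators $K_k\,g\ketbra{\psi}g^\dagger K_k^\dagger$ sum to an operator proportional to $h\ketbra{\psi}h^\dagger$; testing against vectors orthogonal to $h\ket{\psi}$ forces each summand to be proportional to $h\ketbra{\psi}h^\dagger$, i.e.\ $K_k\,g\ket{\psi}=c_k\,h\ket{\psi}$, with $c_k\neq0$ because $K_k,g,h$ are invertible. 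Hence $S_k:=c_k^{-1}h^{-1}K_k\,g$ is a product of invertible local operators fixing $\ket{\psi}$, so $S_k\in\mathcal{S}_\psi$ and $K_k=c_k\,h\,S_k\,g^{-1}$. Plugging this into the completeness relation gives $\sum_k|c_k|^2 S_k^\dagger H S_k=G$, and evaluating $\Lambda$ on the normalized input identifies $\sum_k|c_k|^2=1/r$; setting $p_k=r|c_k|^2$ then yields $\sum_k p_k=1$ and $\sum_k p_k S_k^\dagger H S_k=rG$, which is Eq.~(\ref{eq:sep}).

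The one point that needs genuine care is finiteness: a priori the separable map could have infinitely many Kraus operators, whereas the theorem claims a \emph{finite} collection $\{p_k,S_k\}$. This is dealt with by observing that all the operators $S_k^\dagger H S_k$ lie in the finite-dimensional real vector space of Hermitian operators on $\HH$, so the (convergent) convex decomposition of $rG$ can be replaced by a finite one via Carathéodory's theorem; equivalently, one may assume at the outset that the $\textrm{SEP}_1$ map admits a finite Kraus decomposition. Apart from this, the only substantive ingredient is the rank-one argument in the necessity part, and the rest is careful bookkeeping of the normalization constant $r$ — the place where a sign or inverse is most likely to be misplaced.
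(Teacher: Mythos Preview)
Your proposal is correct and follows essentially the same approach as the paper. Note that the paper does not give its own proof of this theorem (it is cited from \cite{GoWa11}), but the proof of the more general Theorem~\ref{ThSEP} in the paper proceeds exactly along your lines: each Kraus operator reaching $h\ket{\psi}$ is written as $\sqrt{p_k}\,(n_1/n_2)\,hS_kg^{-1}$ for some $S_k\in\mathcal{S}_\psi$, the completeness relation is rewritten as Eq.~(\ref{eq:sep}), Carath\'eodory is invoked for finiteness, and sufficiency is obtained by running the argument in reverse. Your rank-one argument for the necessity direction is spelled out a bit more explicitly than in the paper, but the substance is identical.
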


It is currently unclear whether a pure state transformation that is possible via LOCC is always possible via $\textrm{SEP}_1$. However, we will show in the following that there exist state transformations via SEP which are impossible via $\textrm{SEP}_1$ and therefore SEP is strictly larger than $\textrm{SEP}_1$, i.e.\ $\textrm{SEP}_1\subsetneq_T \textrm{SEP}$. In order to see this, let us note that the Kraus operators occurring in a separable map might also annihilate the initial state, leading to more general maps. That is, operators $M_k$, with $M_k g \ket{\psi} = 0$ need to be taken into account. Hence, we have the following theorem characterizing SEP transformations.

\begin{theorem}
\label{ThSEP}
The state $g \ket{\psi}$ can be transformed to $h \ket{\psi}$ via SEP if and only if there exists a finite set of probabilities $p_k \geq 0$, $\sum_k p_k = 1$, symmetries $\{S_k\}_k \subseteq \mathcal{S}_\psi$, and local singular matrices $N_q \in \mathcal{N}_{g \psi}$ such that

\begin{align}
\label{eq:SEP}
 \frac{1}{r} \sum_{k} p_k S_k^\dagger H S_k  + g^\dagger \sum_q   N_q^\dagger N_q g =G,
\end{align}
where $r = ||h\ket{\psi}||^2 / ||g\ket{\psi}||^2$.

\end{theorem}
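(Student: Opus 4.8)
The plan is to prove the two implications separately, in each case passing between a separable Kraus decomposition of a CPTP map and the operator identity (\ref{eq:SEP}). The observation that drives the whole argument -- and that was missed in the analysis behind the $\textrm{SEP}_1$ criterion -- is the following: if a SEP map $\Lambda$ with product Kraus operators $\{K_i\}$ satisfies $\Lambda(g\ket{\psi}\bra{\psi}g^\dagger)\propto h\ket{\psi}\bra{\psi}h^\dagger$, then the right-hand side has rank one, so each summand $K_i g\ket{\psi}\bra{\psi}g^\dagger K_i^\dagger$ is positive of rank at most one and must be proportional to it; hence for every $i$ either $K_i g\ket{\psi}=0$ or $K_i g\ket{\psi}=c_i\,h\ket{\psi}$ with $c_i\neq 0$. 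The first alternative is precisely what enlarges SEP beyond $\textrm{SEP}_1$.

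For the ``if'' direction I would assume (\ref{eq:SEP}) and set $M_k:=\sqrt{p_k/r}\;hS_kg^{-1}$ for each $k$, keeping the given $N_q$. Conjugating (\ref{eq:SEP}) by $g^{-1}$ on the right and $(g^\dagger)^{-1}$ on the left turns it into the completeness relation $\sum_k M_k^\dagger M_k+\sum_q N_q^\dagger N_q=\one$, so $\Lambda(X):=\sum_k M_kXM_k^\dagger+\sum_q N_qXN_q^\dagger$ is CPTP, and it is separable because $h$, $S_k$, $g^{-1}$ and (by hypothesis) the $N_q$ are all of product form. Feeding in $g\ket{\psi}\bra{\psi}g^\dagger$ annihilates every $N_q$ branch, while each $M_k$ branch contributes $(p_k/r)\,h\ket{\psi}\bra{\psi}h^\dagger$; summing over $k$ and using $\sum_k p_k=1$ together with $r=\|h\ket{\psi}\|^2/\|g\ket{\psi}\|^2$ shows that the normalized $g\ket{\psi}$ is mapped exactly to the normalized $h\ket{\psi}$.

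For the ``only if'' direction I would start from a separable $\Lambda$ with product Kraus operators $\{K_i\}$ that realizes the transformation and split the index set, using the dichotomy above, into $\mathcal{I}_0$ (those with $K_ig\ket{\psi}=0$) and $\mathcal{I}_1$ (those with $K_ig\ket{\psi}=c_ih\ket{\psi}$, $c_i\neq 0$). Each $K_i$ with $i\in\mathcal{I}_0$ is a product operator annihilating the nonzero vector $g\ket{\psi}$, hence singular and an element of $\mathcal{N}_{g\psi}$; these play the role of the $N_q$. For $i\in\mathcal{I}_1$ set $B_i:=c_i^{-1}h^{-1}K_ig$, a product operator with $B_i\ket{\psi}=\ket{\psi}$. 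The key structural ingredient is the lemma that a product operator fixing a fully entangled state must be invertible: writing $\ket{\psi}$ across the cut $i\,|\,\mathrm{rest}$ as a $d_i\times\prod_{j\neq i}d_j$ coefficient matrix $\Psi$, which has full row rank $d_i$ because $\rho_i$ does, the identity $B_i^{(i)}\Psi\,(\bigotimes_{j\neq i}B_i^{(j)})^T=\Psi$ forces $\mathrm{rank}\,B_i^{(i)}=d_i$, and likewise across every cut. Hence $B_i\in\mathcal{S}_\psi$ and $K_i=c_i\,hB_ig^{-1}$. Now conjugate the completeness relation $\sum_i K_i^\dagger K_i=\one$ by $g$: the $\mathcal{I}_1$ terms yield $\sum_{i\in\mathcal{I}_1}|c_i|^2 B_i^\dagger HB_i$ and the $\mathcal{I}_0$ terms yield $g^\dagger(\sum_q N_q^\dagger N_q)g$, while comparing traces (equivalently, the total weight of $\Lambda(g\ket{\psi}\bra{\psi}g^\dagger)=\tfrac1r h\ket{\psi}\bra{\psi}h^\dagger$) gives $\sum_{i\in\mathcal{I}_1}|c_i|^2=1/r$. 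Setting $p_i:=r|c_i|^2$ and $S_i:=B_i$ then reproduces (\ref{eq:SEP}).

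I expect the invertibility lemma to be the only genuinely nontrivial step; the rest is linear-algebraic bookkeeping, the main subtleties being the consistent tracking of the normalization factor $r$ and the reduction of a possibly countable Kraus decomposition of $\Lambda$ to a finite one so that the ``finite set of probabilities'' statement is justified -- this is handled by a Carath\'eodory argument in the finite-dimensional real space of Hermitian operators on $\mathcal{H}$, exactly as in the $\textrm{SEP}_1$ case.
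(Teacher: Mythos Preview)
Your proof is correct and follows essentially the same route as the paper's: split the Kraus operators into those that hit the target and those that annihilate the input, rewrite the former as $hS_kg^{-1}$ with $S_k\in\mathcal{S}_\psi$, plug into the completeness relation, and invoke Carath\'eodory for finiteness. The only difference is one of detail---you spell out the rank-one dichotomy and the invertibility of any product operator fixing a fully entangled state, whereas the paper treats these as understood from the analogous argument in \cite{GoWa11}.
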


\begin{proof}

The proof of this theorem is analogous to the proof of Theorem 1 presented in \cite{GoWa11}. However, here non--invertible matrices have to be taken into account. We will first show that Eq. (\theequation) necessarily holds, if the transformation is possible via SEP.
Let $M_k$ ($N_q$) denote those Kraus operators, which reach the final state with non--vanishing probability (annihilate the initial state) respectively, i.e.
\bea M_k g\ket{\psi}/n_1&=&\sqrt{p_k} h\ket{\psi}/n_2,\\
N_q g\ket{\psi}&=&0 \eea

where $p_k > 0$ and $n_1=||g\ket{\psi}||$, $n_2=||h\ket{\psi}||$. Note that only finitely many measurement operators have to be taken into account (even if the stabilizer contains infinitely many elements) due to Caratheodory's theorem.
The first equation leads to $M_k=\sqrt{p_k} n_1/n_2 h S_k g^{-1}$ where $S_k$ is an element of the stabilizer $\mathcal{S}_\psi$ of $\ket{\psi}$. The completeness relation, $\sum_k M_k^\dagger M_k +\sum_q N_q^\dagger N_q=\one$ is hence equivalent to
\bea
\frac{1}{r} \sum_k p_k S_k^\dagger H S_k + g^\dagger \sum_q N_q^\dagger N_q g=G, \eea which proves that Eq. (\ref{eq:SEP}) has to be necessarily satisfied. That this condition is sufficient follows using the argument above in the reverse order.
\end{proof}

As we will see in the following there exist separable transformations which solely become possible when taking Kraus operators with vanishing probability into account. Note, however, that the results presented in \cite{GoKr17,SaGo18}, where it has been shown that almost all $n$-qudit states possess only the trivial stabilizer and are hence not convertible into any other state are not affected, as already proven in \cite{GoKr17}.

\section{Relations among classes of separable operations}

Whereas it is currently not clear whether pure state transformations that are possible via LOCC are always possible via $\textrm{SEP}_1$, we  will show here that $\textrm{SEP}_1$ does not coincide with SEP. Furthermore, we will show that a pure state transformation among fully entangled states that is possible via $\textrm{LOCC}_\N$ is necessarily possible via $\textrm{SEP}_1$ and therefore, any such pure state transformation via $\textrm{LOCC}_\N$ necessarily has to obey the conditions in Theorem 1. Moreover, we will derive sufficient conditions for which pure state transformations that are possible via SEP coincide with those via $\textrm{SEP}_1$. Finally, we will revisit the example presented in \cite{HeSp16} of  $\textrm{SEP}_1$ pure state transformations which cannot be realized via LOCC (taking infinitely many rounds into account). We show that the  statement remains true if one takes into account that there may be more transformations possible via SEP than $\textrm{SEP}_1$, implying that these are indeed examples of pure state transformations which can be achieved with SEP, however not with LOCC.

\subsection{Examples of SEP transformation that are not possible via $\textrm{SEP}_1$}
\label{sec:examples}
Let us start by presenting two distinguished examples of state transformations which are possible via SEP, but not via $\textrm{SEP}_1$. The first example is notable because the considered initial state has solely unitary stabilizer. The second example is found among three-qubit states and thus within the smallest possible multipartite quantum system.

Let us first consider the 5--qubit ring graph state, $\ket{\psi}$. A graph state is a special type of stabilizer state. For an introduction to graph states and stabilizer states we refer the reader to \cite{HeDu06}. The Pauli stabilizer of the state $\ket{\psi}$ is generated by $A_i=Z_{i-1} X_i Z_{i+1}$, for $1\leq i \leq 5$ and $Z_0=Z_5$, $Z_6=Z_1$. For this state we find that $\mathcal{S}_\psi =\left<\{A_i\}_{i=1}^5\right>$. To show this statement we use that if a critical state has finitely many unitary symmetries, then it has no other regular symmetries \cite{WaNo17} and that any graph state is a critical state. Considering the reduced density operators of three qubits it is straightforward to show that all local unitary symmetries of $\ket{\psi}$ are contained in its Pauli stabilizer and thus that there are only finitely many symmetries (for details see Appendix \ref{Appendix A}). We then consider the state transformation from $\ket{\psi}$ to a state $h\ket{\psi}$. Hence, we have that $G=\one$ and $h$ will be specified below. Using that the Pauli stabilizer is abelian, we obtain that Eq. (\ref{eq:sep}) is fulfilled only if $\tr(H P)=0$ for any non-trivial element $P$ of the Pauli stabilizer. Choosing $H=h^\dagger h= (1/2 \one +a Z)\otimes (1/2 \one +a X)\otimes (1/2 \one+a Z) \otimes 1/2 \one \otimes 1/2 \one$, for some $a\in (0,1/2)$, it holds that $\tr(H A_2)\neq 0$ and therefore the transformation  is not possible via $\textrm{SEP}_1$.

We construct now the SEP map which transforms $\ket{\psi}$ into $h \ket{\psi}$. In order to do so, we use the following projectors, which annihilate the initial state,
\begin{align}
Q_1&=\frac{1}{8}(\one+Z)\otimes (\one+ X)\otimes (\one- Z) \otimes \one^{\otimes 2}\label{eq:1}\\
Q_2&=\frac{1}{8}(\one+Z)\otimes (\one- X)\otimes (\one + Z) \otimes \one^{\otimes 2}\\
Q_3&=\frac{1}{8}(\one-Z)\otimes (\one+ X)\otimes (\one+ Z) \otimes \one^{\otimes 2}\\
Q_4&=\frac{1}{8}(\one-Z)\otimes (\one- X)\otimes (\one- Z) \otimes \one^{\otimes 2}.\label{eq:2}
\end{align}

The Kraus operators for the separable map are then given by:
$M_i=a_1 hQ_i$, for $i=1,2,3$ and with $a_1=2\sqrt{2a^3/((1/2+a)^2(1/2-a)(1/8+a^3))}$, $M_4 =2\sqrt{2a^3/((1/2-a)^3(1/8+a^3))} h Q_4$, and $M_5=\sqrt{1/(1/8+a^3)}h; M_6=M_5 A_1, M_7=M_5 A_3,M_8=M_5 A_1 A_3$. It is straightforward to verify the completeness relation $\sum_k M_k^\dagger M_k=\one$ and that the separable map corresponding to these Kraus operators indeed implements the transformation.

State transformations which are possible via SEP, but not via $\textrm{SEP}_1$, can be also found among three-qubit states. The following example is an adaption of the 5-qubit example presented above. Here, we consider a transformation from the 3-qubit ring graph state $\ket{\psi}$, which is LU equivalent to the 3-qubit GHZ state, to
$h_1 \otimes h_2 \otimes h_3 \ket{\psi}$. As before, $G=\one$ and we choose $h$ such that $H=h^\dagger h= (1/2 \one +a Z)\otimes (1/2 \one +a X)\otimes (1/2 \one+a Z)$, for some $a\in (0,1/2)$.
The stabilizer of the considered representative, $\ket{\psi}$, contains (by definition) the operators $A_i=Z_{i-1} X_i Z_{i+1}$, for $1\leq i \leq 3$ and $Z_0= Z_3$, $Z_4=Z_1$ and products thereof. However, in contrast to the five-qubit state above, the state considered here does possess additional symmetries, i.e. more than its Pauli stabilizer. Hence, in order to show that the considered transformation is not possible via $\textrm{SEP}_1$, we cannot use the same argument as above. However, we can resort to previous results on $\textrm{SEP}_1$ transformations among three-qubit states \cite{deSp13}, instead. In order to do so, we write the final state in the standard form  introduced in \cite{deSp13}. One obtains that the final state is up to local unitaries of the form
\begin{align}
h_x \otimes h_x \otimes h_x \ket{GHZ},
\end{align}
with $h_x^\dagger h_x\propto 1/2 \one +a X$. As has been shown in \cite{deSp13} it is not possible to reach states of the form above via  $\textrm{SEP}_1$.

Let us now show that the inclusion of singular Kraus operators allows to derive a map in SEP, which maps $\ket{\psi}$ into $h\ket{\psi}$. We use the projectors $Q_1',Q_2',Q_3',Q_4'$ defined such that $Q_j'\otimes \one^{\otimes 2}=Q_j$, for $Q_j$ as in Eqs. (\ref{eq:1}) to (\ref{eq:2}). Any of these operators annihilates the initial state. The Kraus operators for the separable map then take a similar form as in the previous example, namely:
$M_i=a_1 hQ_i'$, for $i=1,2,3$ and with $a_1=\sqrt{2a^3/((1/2+a)^2(1/2-a)(1/8+a^3))}$, $M_4 =\sqrt{2a^3/((1/2-a)^3(1/8+a^3))} h Q_4'$, and $M_5=(1/2) \sqrt{1/(1/8+a^3)}h; M_6=M_5 A_1, M_7=M_5 A_3,M_8=M_5 A_1 A_3$. Again it is straightforward to verify the completeness relation $\sum_k M_k^\dagger M_k=\one$ and that the separable map corresponding to these Kraus operators indeed implements the transformation. Hence, already for three qubits one can observe a difference among these sets of operations.
In the next section we will see that finite-round LOCC transformations among pure states are contained in $\textrm{SEP}_1$.

\subsection{State transformations using finitely many rounds of communication }
Finite-round LOCC protocols constitute a subset of LOCC that is of particular practical relevance. In this subsection we will show that there exists an $\textrm{LOCC}_\N$ transformation among fully entangled states only if Eq. (\ref{eq:sep}) holds, as stated in the following Lemma.

\begin{lemma}
\label{lemma:LOCCNinSEP1}
If there exists a map in $\textrm{LOCC}_\N$ which transforms a pure, fully entangled state into another, then there also exists a map in $\textrm{SEP}_1$ which accomplishes this transformation, i.e. $\textrm{LOCC}_\N\subseteq_T \textrm{SEP}_1$.
\end{lemma}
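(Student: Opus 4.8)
The plan is to analyze the structure of an $\textrm{LOCC}_\N$ protocol that maps a fully entangled pure state $g\ket{\psi}$ deterministically to $h\ket{\psi}$ and show that along \emph{every} branch of the protocol tree the intermediate states stay in the same SLOCC class, so that each Kraus operator $K_i$ of the protocol, when restricted to acting on the support of $g\ket{\psi}$, behaves like an invertible local operator. First I would recall that any $\textrm{LOCC}_\N$ map is built from finitely many rounds in which a single party performs a local generalized measurement $\{P_{i_k}^{(s_k)}\}$ followed by local unitary corrections; since the final state $h\ket{\psi}$ is fully entangled and reached with certainty, and since a local measurement outcome with nonzero probability that maps a fully entangled state to a state with a rank-deficient local reduced density matrix cannot be ``undone'' by later local operations, one argues that every outcome occurring with nonzero probability must keep all local ranks maximal. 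Hence, branch by branch, $K_i g\ket{\psi} = \sqrt{q_i}\, h_i\ket{\psi}$ for some probabilities $q_i>0$ with $\sum_i q_i = 1$ and some $h_i = \bigotimes_j h_i^{(j)}$ with $h_i^{(j)} \in GL(d_j,\mathbb{C})$ (indeed $h_i\ket{\psi}$ must be LU-equivalent, in fact equal, to $h\ket{\psi}$ because the protocol is deterministic).

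The key step is then to \emph{replace} each such Kraus operator $K_i$ by the regular local operator $\tilde K_i := \sqrt{q_i}\, h_i g^{-1}$, which agrees with $K_i$ on the support of $g\ket{\psi}$ (i.e.\ $\tilde K_i g\ket{\psi} = K_i g\ket{\psi}$) and is a tensor product of invertible matrices since $g$, $h_i$ are. The new operators $\{\tilde K_i\}$ need not automatically satisfy the completeness relation $\sum_i \tilde K_i^\dagger \tilde K_i = \one$, but since $\sum_i K_i^\dagger K_i = \one$ and the transformation is deterministic one has $\sum_i q_i = 1 = \|g\ket{\psi}\|^2$ appropriately normalized, and each $\tilde K_i$ implements the same branch probability $q_i$ and the same output state; so the map $X \mapsto \sum_i \tilde K_i X \tilde K_i^\dagger$ is a separable (completely positive) map that maps $g\ketbra{\psi} g^\dagger$ to $h\ketbra{\psi} h^\dagger$ up to normalization. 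To turn this into a genuine $\textrm{SEP}_1$ map (trace preserving, all Kraus operators regular), I would invoke Theorem~\ref{ThSEP}/Theorem~1: the existence of the regular operators $\tilde K_i = \sqrt{q_i}\, h_i g^{-1}$ with $h_i\ket{\psi} = h\ket{\psi}$ forces $h_i = h S_i$ for symmetries $S_i \in \mathcal{S}_\psi$, and the partial completeness relation $\sum_i \tilde K_i^\dagger \tilde K_i \le \one$ on the support translates into $\frac{1}{r}\sum_i q_i S_i^\dagger H S_i \le G$; one then closes the gap to equality by adding, if necessary, further regular symmetry terms (as in the proof of Theorem~1 of \cite{GoWa11}), which is always possible because $G$ itself is of the form $\frac{1}{r} S^\dagger H S$-reachable trivially via $S = \identity$ when $g = h$, and more generally by a convexity/Caratheodory argument. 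This yields Eq.~(\ref{eq:sep}), hence an $\textrm{SEP}_1$ map.

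The main obstacle I anticipate is the first step: rigorously proving that no outcome of positive probability in a finite-round protocol can produce a local rank drop. The subtlety is that a single party's rank drop could in principle be ``compensated'' in the sense that the \emph{global} state still lies in the target SLOCC class only if later parties can restore full local rank — but local operations cannot increase the Schmidt rank across any bipartite cut, nor can they restore a local reduced rank once it has dropped on one party's side, so a branch with a rank-deficient intermediate state can never reach the fully entangled $h\ket{\psi}$; since the protocol is deterministic, such branches must have probability zero, i.e.\ they don't occur. Making this airtight requires a careful induction on the rounds, tracking that at each round the conditional state on each nonzero-probability branch remains fully entangled (equivalently, in the SLOCC class of $\ket{\psi}$), and using that the measurement operators restricted to the current support act invertibly. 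A secondary, more routine obstacle is bookkeeping the normalizations and the completeness relations so that the constructed $\{\tilde K_i\}$ (after possibly appending symmetry terms) form a legitimate trace-preserving separable map with only regular Kraus operators, but this parallels \cite{GoWa11} closely and should present no conceptual difficulty.
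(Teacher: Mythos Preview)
Your rank--drop idea is exactly the paper's, but you are missing the one observation that makes the argument close immediately, and this is why you end up with an unnecessary and in fact unjustified ``replacement'' construction.

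The paper's point is that in an LOCC round only \emph{one} party applies a nontrivial operator (the measurement $P_{i_k}^{(s_k)}$; the other factors are unitary). A local operator that is singular at a single site can \emph{never} annihilate a fully entangled state (annihilation would require singularity at at least two sites). Therefore, if some $P_{i_k}^{(s_k)}$ were singular, the corresponding outcome would occur with strictly positive probability and produce a state with a rank--deficient local marginal, which (as you correctly note) can never reach the fully entangled target. Hence every measurement operator in every round must already be \emph{regular}. Consequently each overall Kraus operator $K_i$ of the $\textrm{LOCC}_\N$ map is a product of regular local operators and is itself regular; the completeness relation $\sum_i K_i^\dagger K_i=\one$ is inherited from the protocol. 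The $\textrm{LOCC}_\N$ map \emph{is} a $\textrm{SEP}_1$ map---no replacement needed.

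Because you do not isolate this ``single--site singular operators have positive probability'' step, you leave open the possibility of zero--probability singular branches and then try to fix things by defining $\tilde K_i=\sqrt{q_i}\,h_i g^{-1}$ and ``closing the gap.'' That part is a genuine gap: you give no reason why $\sum_i \tilde K_i^\dagger\tilde K_i\le\one$ should hold when $\tilde K_i\neq K_i$ (agreement on the one--dimensional support of $g\ket{\psi}$ yields a single scalar equation, not an operator inequality), and one cannot in general pad a strict inequality $\frac{1}{r}\sum_i q_i S_i^\dagger H S_i < G$ up to equality by adding further terms of the same form---the difference need not lie in the convex cone generated by $\{S^\dagger H S:S\in\mathcal{S}_\psi\}$. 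Once you add the missing observation, all of this machinery becomes superfluous and the proof collapses to the paper's two--paragraph argument.
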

\begin{proof}
First, note that if an $\textrm{LOCC}_\N$ protocol is solely composed of measurements with regular measurement operators, then a Kraus decomposition of the map containing only local invertible Kraus operators exists. In this case, operators $N_q$ which contain singular matrices are thus not present in Eq. (\ref{eq:SEP}). Let us now show that the case, in which measurements including a singular measurement operator are performed, cannot occur. In order to see this, note that any local operator that annihilates a fully entangled state must be singular at not less than two sites. Any local operator that is singular at only one site, acting on a fully entangled state, thus yields a state with strictly positive norm, which, moreover, must have a rank deficient reduced density matrix at some site.
Let us now consider the first round in which one of the parties implements a measurement containing a singular measurement operator. Due to the considerations above, the resulting state corresponding to the singular measurement operator occurs with a strictly positive probability and, furthermore, the resulting state is no longer in the same SLOCC class as the final state. Hence, it is impossible to transform this state via LOCC into the final state \footnote{Note that it is impossible that this branch gets completely annihilated as any subsequent measurement (by any party) has to be complete. }. Hence, there is always a non--vanishing probability to obtain a state which is not in the same SLOCC class as the target state, which shows that it is impossible to deterministically transform one fully entangled state into another utilizing in any step of an $\textrm{LOCC}_\N$ protocol a singular matrix. This completes the proof.
\end{proof}

Whether the same holds true also when one includes the possibility of infinitely many rounds of classical communication is currently not clear. When dealing with infinite round protocols, many reasonings that apply to finite round protocols do not hold any more. Hence, the investigation of these protocols is more complicated.
In particular, the proof of Lemma \ref{lemma:LOCCNinSEP1} cannot be straightforwardly generalized to cover LOCC protocols with infinitely many rounds of classical communication. This is because such protocols could in principle implement a SEP transformation with non-invertible Kraus operators through a sequence of LOCC$_\mathbb{N}$ maps $\{\Lambda_i\}$ in which every $\Lambda_i$ has invertible Kraus operators. In fact, notice that if $|\psi\rangle$ cannot be transformed into $|\phi\rangle$ by SEP$_1$, this does not forbid the existence of a sequence of $\textrm{SEP}_1$ maps $\{\Lambda_i\}$ such that $||\Lambda_i(|\psi\rangle\langle\psi|)- |\phi\rangle\langle\phi|||\to 0$ as $i\to\infty$.

\subsection{States with unitary stabilizer}
In this section we will focus on state transformations within a SLOCC class for which a representative with solely unitary local symmetries can be found. It has been shown that whenever the stabilizer is finite, there always exists a representative for which the stabilizer is unitary, moreover, in case a critical state exists, it is the critical state \cite[Propositions 5 and 6]{GoWa11}.
We derive a necessary condition for SEP-transformations to be possible as well as a sufficient condition under which singular Kraus operators need not be taken into account, as stated in the following Lemma.

\begin{lemma}
\label{Lemma:unitary} Let $\mathcal{S}_\psi$ be unitary.
Consider an initial state $g \ket{\psi}$ and a final state $h \ket{\psi}$, where we choose w.l.o.g. $||g \ket{\psi}|| = ||h \ket{\psi}||=1$.
Then, if $g \ket{\psi}$ can be transformed into $h \ket{\psi}$ via SEP, it necessarily holds that $\tr (G)\geq \tr (H)$. Moreover, in case $\tr (G)= \tr (H)$, the SEP transformation is possible if and only if Eq. (\ref{eq:sep}) holds, i.e., if and only if a $\textrm{SEP}_1$ transformation is possible.

\end{lemma}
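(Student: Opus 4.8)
The plan is to read off both claims directly from Theorem~\ref{ThSEP} by taking traces. Since $||g\ket{\psi}|| = ||h\ket{\psi}|| = 1$ we have $r=1$, so Theorem~\ref{ThSEP} states that the SEP transformation is possible if and only if there exist probabilities $p_k$, symmetries $S_k \in \mathcal{S}_\psi$ and operators $N_q \in \mathcal{N}_{g\psi}$ with $\sum_k p_k S_k^\dagger H S_k + g^\dagger (\sum_q N_q^\dagger N_q) g = G$, cf.\ Eq.~(\ref{eq:SEP}). I would then take the trace of this operator identity. By hypothesis $\mathcal{S}_\psi$ is unitary, so $\tr(S_k^\dagger H S_k) = \tr(H)$ for every $k$, and since $\sum_k p_k = 1$ the first block contributes exactly $\tr(H)$. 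The second block contributes $\sum_q \tr((N_q g)^\dagger (N_q g)) \ge 0$, a sum of traces of positive semidefinite operators. Hence $\tr(G) = \tr(H) + \sum_q \tr((N_q g)^\dagger (N_q g)) \ge \tr(H)$, which proves the first claim.

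For the second claim, assume $\tr(G) = \tr(H)$. Then the nonnegative remainder above must vanish, and since each $\tr((N_q g)^\dagger (N_q g)) = ||N_q g||^2 \ge 0$, this forces $N_q g = 0$ for all $q$; as $g$ is invertible, $N_q = 0$ for all $q$. Therefore Eq.~(\ref{eq:SEP}) reduces to $\sum_k p_k S_k^\dagger H S_k = G$, which is precisely Eq.~(\ref{eq:sep}) with $r=1$. Conversely, if Eq.~(\ref{eq:sep}) holds then a SEP transformation exists (it is even an $\textrm{SEP}_1$ one, taking no $N_q$'s in Theorem~\ref{ThSEP}). So in the equality case the SEP transformation is possible if and only if Eq.~(\ref{eq:sep}) holds, and by Theorem~1 this is equivalent to the existence of an $\textrm{SEP}_1$ transformation.

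Everything here is an immediate consequence of Theorem~\ref{ThSEP} together with the unitarity of the stabilizer, so I do not expect a genuine obstacle. The only points requiring care are that the contribution of the annihilating Kraus operators $N_q$ to the trace is genuinely nonnegative, and that its vanishing --- combined with the invertibility of $g$, i.e.\ with the assumption that we work among fully entangled states --- forces each $N_q$ to be the zero operator, so that the SEP criterion collapses exactly onto Eq.~(\ref{eq:sep}), the $\textrm{SEP}_1$ criterion.
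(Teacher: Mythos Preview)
Your argument is correct and follows exactly the same route as the paper: take the trace of Eq.~(\ref{eq:SEP}) with $r=1$, use unitarity of $\mathcal{S}_\psi$ to get $\tr(G)=\tr(H)+p$ with $p=\tr\big(g^\dagger\sum_q N_q^\dagger N_q\, g\big)\ge 0$, and observe that $p=0$ forces all $N_q=0$ since $g$ is invertible. If anything, your write-up spells out the use of unitarity and the converse direction more explicitly than the paper does.
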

\begin{proof}
Consider Theorem \ref{ThSEP} for states with unitary stabilizers, i.e. $\mathcal{S}_\psi \subset U(d_1)\otimes \ldots \otimes U(d_n)$.
Taking the trace of Eq. (\ref{eq:SEP}) and using that $r=1$ we obtain
\bea  \tr(H) +p=\tr( G),\eea where $p=\tr( g^\dagger \sum_q N_q^\dagger N_q g)$. Note that $p\geq 0$. The assertion follows from the fact that $p=0$ iff $N_q=0$ $\forall q$, as the trace of positive operators is positive and as $g$ is regular.
\end{proof}

Hence, for normalized initial and final states, projective measurements need not be taken into account as long as $\tr(H)=\tr(G)$ (in case the stabilizer is unitary). Note that in the examples presented in Section \ref{sec:examples}, $\tr(H)=\tr(G)$ is obviously not fulfilled, when one normalizes the states.

In the following we will use Lemma \ref{Lemma:unitary} to provide an adaptation of the proof of the examples of pure state transformation that are possible via $\textrm{SEP}_1$ but not via LOCC given in \cite{HeSp16}.  In particular, we will take into account that there might exists  transformations that can be implemented via SEP but not via $\textrm{SEP}_1$.

\subsection{Examples of pure state transformations that are possible via SEP, but not via LOCC}
In \cite{HeSp16}, some of us have considered examples of pure state transformations that are possible via $\textrm{SEP}$, but not (infinite-round) LOCC. There, however, restricted LOCC operations have been considered, as $\textrm{SEP}_1$ was considered to be a superset of LOCC, instead of SEP. We will first briefly review the examples, as well as the main idea of the proof.
Then, we will present an adaptation of the proof to show that, indeed, these examples of state transformations are possible via SEP, but not LOCC. Let us mention here that these examples further show that $\textrm{SEP}_1 {\not\subseteq}_T \textrm{LOCC}$.

Let $\ket{\psi}$ denote the 3 qutrit seed states presented in \cite{HeSp16}. As shown in \cite{HeSp16} (see also \cite{BrLu04}), we have that $\mathcal{S}_\psi$ contains only (nine) unitary elements. We consider the transformation from $\ket{\psi}$ to $h \ket{\psi}$ (normalized), where $h=h_1\otimes h_2 \otimes \one$ as given in \cite{HeSp16}. The detailed definition of $h$ will not be relevant here. However, it will become important that $\tr(H)=\tr(\one)$ for $||h\ket{\psi}||=1$, as can be easily verified. As shown in \cite{HeSp16} $\ket{\psi}$ can be mapped to $h \ket{\psi}$ via $\textrm{SEP}_1$ (and therefore also via SEP). However, the proof that the transformation is not possible via LOCC has to be adapted. The reason for that is that in \cite{HeSp16}, we have argued that the transformation is not possible via LOCC as
\begin{enumerate}
\item $\ket{\psi}$ cannot be transformed to $h \ket{\psi}$ in a single round of classical communication (not even probabilistically) and
\item $\ket{\psi}$ is the only state that can be transformed to $h \ket{\psi}$ via $\textrm{SEP}_1$.
\end{enumerate}
However, statement 2. might no longer hold for SEP if one takes operators $N_q$ into account, i.e. if one considers the most general SEP operations.

Let us now assume that there exists an LOCC protocol transforming $\ket{\psi}$ into $h \ket{\psi}$ and show a contradiction. The LOCC protocol must be non-trivial, hence there must exist a first round, in which a non-trivial measurement is performed. The state at hand before this round is still (LU-equivalent to) $\ket{\psi}$ and all intermediate states afterwards are of the form $g_i  \ket{\psi}$, where $g_i$ acts trivially on all parties but $i$, and are normalized such that $||g_i  \ket{\psi}|| = 1$. It is important to note that for any such $g_i$, it can be shown that $\tr{G_i} = \tr{\identity} = \tr{H}$. As mentioned before, the last equality follows from the special form of the considered $h$ as  in \cite{HeSp16}. As the protocol must be deterministic, all intermediate states $g_i  \ket{\psi}$ must be convertible to $h  \ket{\psi}$ via LOCC and thus via SEP. As all the conditions for  Lemma \ref{Lemma:unitary} are satisfied, this lemma implies that $g_i$ and $h$ must satisfy Eq. (\ref{eq:sep}). However, in \cite{HeSp16} it is shown that the only state which fulfills (up to LU) this condition is $\ket{\psi}$ itself. Hence, all  $g_i  \ket{\psi}$ are LU-equivalent to $\ket{\psi}$. This contradicts the fact that we were considering a non-trivial round and proves that these transformations cannot be implemented via LOCC.

\section{Conclusion and outlook}
In this work we considered state transformations among pure fully entangled states via separable maps and certain subsets of SEP. In particular, we showed that for the most general transformation via SEP, it is essential to include Kraus operators that occur with zero probability when applied to the initial state as there exist state transformations which are not possible otherwise. This can already be observed in the three qubit  and five-qubit scenario. Moreover, we proved that finite-round LOCC protocols do neither require nor even allow for local measurements containing singular measurement operators in case the initial and the final state are fully entangled. In case the stabilizer is unitary we found a necessary condition for the existence of pure state transformations via SEP that is independent of the stabilizer. Moreover, we found constraints under which the existence of pure state transformations via SEP coincices with those via $\textrm{SEP}_1$. Latter we used to prove that the examples given in \cite{HeSp16} indeed correspond to pure state transformations which are possible via SEP and not via LOCC (including infinitely many rounds of classical communication).
The main open question is whether $\textrm{LOCC}\subseteq_T\textrm{SEP}_1$ holds or not. The answer to it would not only shed light on how  results of previous works need to be interpreted but in case it is negative, it would also show that there are pure state transformations which only become possible if infinite rounds of classical communication are utilized.

B.K. thanks R. Brieger and D. Sauerwein for discussions related to the characterization of the local unitary symmetries of special graph states.  M.E., M.H., and B.K. acknowledge financial support from the Austrian Science Fund (FWF) grant DK-ALM: W1259-N27 and the SFB BeyondC (Grant No. F7107-N38). Furthermore, M.E. and B.K. acknowledge support of the Austrian Academy of Sciences via the Innovation Fund ``Research, Science and Society''. CS acknowledges support by the Austrian Science Fund (FWF): J 4258-N27 and the ERC (Consolidator Grant 683107/TempoQ). J.I.dV. acknowledges financial support by the Spanish MINECO through grants MTM2017-84098-P and MTM2017-88385-P and by the Comunidad de Madrid through grant QUITEMAD-CM P2018/TCS­4342.

\appendix

\section{Local stabilizer of the $5$ qubit ring graph\label{Appendix A}}

Let $\ket{\psi}$ be the $5$-qubit ring graph state and let $T_\psi=\left<\{A_i\}_i\right>$ be its Pauli stabilizer. For an introduction to graph states see \cite{HeDu06}. We show here that $\mathcal{S}_\psi=T_\psi$. For a more general form of this proof see \cite{RaDa18}.

First we use that $\ket{\psi}$ is a connected graph state and thus a critical state. For critical states it holds that if the number of unitary elements in  $\mathcal{S}_\psi$ is finite, then these are the only elements of $\mathcal{S}_\psi$ \cite{WaNo17}. Hence, showing that any unitary element of $\mathcal{S}_\psi$ is an element of $T_\psi$ (which is a finite group) implies the statement.
In order to see that, note that for a graph states, $\ket{\psi}$ it holds that  $\rho \equiv \ket{\psi}\bra{\psi}\propto\sum_{T\in T_{\psi}}T$. Taking the partial trace over system 4,5 and over system 3,4, the condition

\bea
U\rho U^\dagger=\rho,
\label{eq:sym}
\eea
with $U=U_1\otimes \ldots \otimes U_5$ implies that

\bea
U_1ZU_1^\dagger\otimes U_2XU_2^\dagger\otimes U_3ZU_3^\dagger&=&Z\otimes X\otimes Z\\
U_1XU_1^\dagger\otimes U_2ZU_2^\dagger\otimes U_5ZU_5^\dagger&=&X\otimes Z\otimes Z
\eea
Hence, $U_1$ has to leave $X$ and $Z$ invariant under conjugation (up to a proportionality factor). It is straightforward to see that this implies that $U_1\in \left<X,Z\right>$ (up to a phase factor). A similar argument holds for any other $U_j$ ($j\neq 1$) due to the symmetry of the state. Next, we show that there exists no Pauli operator, $\sigma_1\otimes \ldots\otimes \sigma_5 \notin T_\psi$, which is a symmetry of the graph state. To demonstrate this, we note that the action of any Pauli operator on a graph state coincides with the action of an operator $Z^{\vec{k}}$, with $k_i\in \{0,1\}$  (up to some phase) (see e.g. \cite{HeDu06}), i.e.
\bea
U\ket{\psi}=e^{i\gamma}\sigma_1\otimes \ldots\otimes \sigma_5 \ket{\psi}=e^{i\tilde{\gamma}}Z^{\vec{k}}\ket{\psi}\overset{!}{=} \ket{\psi}.\label{aeq1}
\eea
For $\vec{k} \neq \vec{0}$ we have that $\ket{\psi}$ is orthogonal to $Z^{\vec{k}}\ket{\psi}$ (see e.g. \cite{HeDu06}) and thus for equation \ref{aeq1} to hold necessarily $U\in T_\psi$.
\end{document}